\DeclareMathOperator*{\argmax}{arg\,max}
\newcommand{\argmin}{\operatornamewithlimits{argmin}}
\DeclareMathOperator{\Tr}{Tr}
\newtheorem{theorem}{Theorem}
\newtheorem{lemma}[theorem]{Lemma}
\begin{document}

\title{Variational cross-validation of slow dynamical modes in molecular kinetics}
\author{Robert T. McGibbon}
\affiliation{Department of Chemistry, Stanford University, Stanford CA 94305, USA}
\author{Vijay S. Pande}
\affiliation{Department of Chemistry, Stanford University, Stanford CA 94305, USA}
\email{pande@stanford.edu}
\date{\today}
\keywords{molecular dynamics, propagator, cross-validation, Rayleigh quotient, variational theorem}

\begin{abstract}
  Markov state models (MSMs) are a widely used method for approximating the eigenspectrum of the molecular dynamics propagator, yielding insight into the long-timescale statistical kinetics and slow dynamical modes of biomolecular systems. However, the lack of a unified theoretical framework for
choosing between alternative models has hampered progress, especially for non-experts applying these methods to novel biological systems. Here, we consider cross-validation with a new objective function for estimators of these slow dynamical modes, a generalized matrix Rayleigh quotient (GMRQ), which measures the ability of a rank-$m$ projection operator to capture the slow subspace of the system. It is shown that a variational theorem bounds the GMRQ from above by the sum of the first $m$ eigenvalues of the system's propagator, but that this bound can be violated when the requisite matrix elements are estimated subject to statistical uncertainty. This overfitting can be detected and avoided through cross-validation. These result make it possible to construct Markov state models for protein dynamics in a way that appropriately captures the tradeoff between systematic and statistical errors.
\end{abstract}
\maketitle

\section{Introduction}
Conformational dynamics are central to the biological function of
macromolecular systems such as signaling proteins, enzymes, and
channels. The molecular description of processes as diverse as protein
folding, kinase activation, voltage-gating of ion channels, and
ubiquitin signaling involve not just the structure of a unique single
conformation, but the conformational dynamics between a multitude
of states accessible on the potential energy
surface.\cite{dill2008protein, huse2002conformational,
  vargas2012emerging, phillips2013conformational} These dynamics occur
on a range of timescales and have varying degrees of structural
complexity: localized vibrations may occur on the 0.1 ps timescale,
while large-scale structural changes like protein folding can take
seconds or longer.\cite{careri1975statistical} Although many
experimental techniques -- most notably X-ray crystallography and nuclear magnetic resonance spectroscopy
-- can yield detailed structural information on functional
conformations, the experimental characterization of the dynamical
processes, intermediate conformations and transition pathways in
macromolecular systems remains exceptionally
challenging.\cite{buergi1983crystal, mittermaier2006new}

Atomistic molecular dynamics (MD) simulations can complement
experiment and provide a powerful tool for probing conformational
dynamics, allowing researchers to directly visualize and analyze the
time evolution of macromolecular systems in atomic detail. Three major
challenges for MD simulation of complex systems are the accuracy of
the potential energy functions, adequate sampling of conformational
space, and quantitative analysis of simulation results. The
state-of-the-art on all three fronts has advanced rapidly in recent
years. A new generation of increasingly accurate forcefields have
recently emerged, such as those which include explicit polarizability
and have been parameterized more
systematically.\cite{wang2012systematic, huang2013automated,
  ponder2010current, best2012optimization, lopes2013polarizable} On
the sampling problem, the introduction of graphical processing units
(GPUs) has dramatically expanded the timescales accessible with MD
simulation at modest cost, and specialized MD-specific hardware and
distributed computing networks have yielded further
gains.\cite{stone2010gpu, shaw2008anton, shirts2000science,
  buch2010high, kohlhoff2014cloud} In this work, we focus on the
remaining challenge, the quantitative analysis of MD simulations.

Despite, or perhaps because of their detail, MD simulations require
further analysis in order to yield insight into macromolecular
dynamics or quantitative predictions capable of being tested
experimentally. The direct result of a simulation, an MD trajectory,
is a time series of Cartesian positions (and perhaps momenta) of
dimension $3N$ ($6N$ if momenta are retained), where
$N$ is the number of atoms in the system. Because routine MD
simulations may contain tens or hundreds of thousands of atoms, these
time series are extremely high-dimensional. A multitude of methods
have been proposed for reducing the dimensionality or complexity of MD
trajectories and enabling the analysis of the system's key long-lived
conformational states, dynamical modes, transition pathways, and
essential degrees of freedom.\cite{chodera2007automatic,
  deuflhard2005robust, rohrdanz2011determination, altis2007dihedral,
  das2006low, krivov2004hidden, weinan2006towards}


In this work, we combine two central ideas from machine learning and chemical
physics -- hyperparameter selection via cross-validation and variational
approaches for linear operator eigenproblems -- to create a new method for
discriminating between alternative simplified models for molecular kinetics
constructed from MD simulations. Towards this end, we prove a new variational
theorem concerning the simultaneous approximation of the first $m$
eigenfunctions of the propagator of a high-dimensional reversible stochastic
dynamical system, which mathematically formalize the slow collective dynamical
motions we wish to identify in molecular systems.

\section{Cross Validation}
\label{sect:2}

In seeking to estimate the slowest collective dynamical modes of a molecular system from a finite set of stochastic trajectories, statistical noise is unavoidable. These dynamical modes, which we formally identify as the first $m$ eigenfunctions, $\phi_i$, of the propagator, an integral operator associated with the dynamics of a molecular system (vide infra), are functions on $\mathbb{R}^{3N}$ to $\mathbb{R}$. Like the ground state wave function in quantum chemistry, these eigenfunctions can only be approximately represented in any finite basis set. Reducing this approximation error, a statistical bias, motivates the use of larger and more flexible basis sets. Unfortunately, in an effect known as the \emph{bias-variance tradeoff},\cite{McGibbon2014Statistical, vapnik1998statistical} larger basis sets tend to exacerbated a competing source of error, the model variance: with a fixed simulation data set but additional adjustable parameters due to a larger basis set, the model estimates of these eigenfunctions become more unstable and uncertain.


As has been known since at least the early 1930s, training a
statistical algorithm and evaluating its performance on the same data set
generally yields overly optimistic results.\cite{larson1931shrinkage} For this reason, standard practice in supervised machine learning is to divide a data set into separate training and test sets. The model parameters are estimated using the training data set, but performance is evaluated separately by scoring the now-trained model on the separate test set, consisting of data points that were left out during the training phase. To avoid overfitting, the choice between alternative models is made using test set performance, not training set performance.

However, because researchers may expend significant effort to collect data sets, the exclusion of a large fraction of the of the data set from the training phase can be a costly proposition. $k$-fold cross-validation is one alternative that can be more data-economical, where the data is split into $k$
disjoint subsets, each of which is cycled as both the training and test set.

Let $X$ be a collection of molecular dynamics trajectories (the data set), which we assume for simplicity to consist of a multiple of $k$ independent MD trajectories of equal length. In $k$-fold cross validation, the trajectories are split into $k$ equally-sized disjoint subsets, called folds, denoted $X^{(i)}$, for $i \in \{1, 2, \ldots, k\}$. These will serve as the test sets. Let $X^{(-i)} = X \setminus X^{(i)}$ denote the set of all trajectories excluded from fold $i$; these will serve as the training sets.

Consider an algorithm to estimate the $m$ slowest dynamical modes of the system, $g$. Examples of such estimators include Markov state models (MSMs)\cite{prinz2011markov} and time-structured independent components analysis (tICA).\cite{schwantes2013Improvements, perezhernandez2013Identification} The result of this calculation, the estimated eigenfunctions, $\hat{\phi}_{1:m}$, are taken to be a function of both an input dataset, $X$, as well as a set of hyperparameters, $\theta$, which many include settings such as the number of states or clustering algorithm in an MSM or the basis set used in tICA.

\begin{align}
\hat{\phi}_{1:m} = (\hat{\phi}_{1}, \hat{\phi}_{2}, \ldots, \hat{\phi}_{m}) = g(X, \theta)
\end{align}

Furthermore, consider an objective function, $O(\hat{\phi}_{1:m}, X')$, which evaluates a set of proposed eigenfunctions, $\hat{\phi}_{1:m}$, and a (possibly new) dataset $X'$, returning a single scalar measuring the performance or accuracy of these eigenfunctions. The mean cross validation performance of a set of hyperparameters is defined by the following expression, which builds $k$ models on each of the training sets and scores them on the corresponding test sets.

\begin{align}
MCV(\theta) = \frac{1}{k}\sum_{i=1}^k O(g(X^{(-i)}, \theta),\, X^{(i)})
\end{align}

Model selection can be performed by finding the hyperparameters, $\theta^* = \argmax_\theta MVC(\theta)$, which maximize the cross validation performance. Many variants of this protocol with different procedures for splitting the training and test sets, such as repeated random subsampling cross-validation and leave-one-out cross validation, are also possible.\cite{maimon2005data}

The remainder of this work seeks to develop a suitable objective function, $O$, for estimates of the slow dynamical modes in molecular kinetics that can be used as shown above in a cross-validation protocol. This task is complicated by the fact that no ground-truth values of true eigenfunctions, $\phi_i$, are available, either in the training or test sets. Nevertheless, our goal is to construct a \emph{consistent} objective function, such that as the size of a molecular dynamics data set, $X$ grows larger, the maximizer of $O(\cdot, X)$ converges in probability to the true propagator eigenfunctions, $\phi_{1:m}$.

\begin{align}
\label{eq:consistent}
\argmax_{\hat{\phi}_{1:m}}\; O(\hat{\phi}_{1:m}, X) \xrightarrow{p} \phi_{1:m}
\end{align}

\section{Theory Background}

We begin by introducing the notion of the propagator and its eigenfunctions from a mathematical
perspective, introducing the key variables and notation that will be
essential for the remainder of this work. We largely follow the order
of presentation in \citet{prinz2011markov} which contains a longer and
more thorough discussion.

Consider a time-homogeneous, ergodic, continuous-time Markov process
$\mathbf{x}(t) \in \Omega$ which is reversible with respect to a
stationary distribution stationary distribution $\mu(\mathbf{x}) :
\Omega \rightarrow \mathbb{R}^+$. Where necessary for concreteness, we take the phase space, $\Omega$, to be $\mathbb{R}^{3N}$, where $N$ is the number of atoms of a molecular system. The system's stochastic evolution over an
interval $\tau > 0$ is described by a transition probability density
\begin{equation}
p(\mathbf{x},\mathbf{y}; \tau) d\mathbf{y} = \mathbb{P}[\mathbf{x}(t + \tau) \in B_\epsilon(\mathbf{y}) \;|\; \mathbf{x}(t) = \mathbf{x}],
\end{equation}
where $B_\epsilon(\mathbf{y})$ is the open $\epsilon$-ball centered at $\mathbf{y}$ with infinitesimal measure $d\mathbf{y}$.

Consider an ensemble of such systems at time $t$, distributed
according to some probability distribution $p_t(\mathbf{x})$. After
waiting for a duration $\tau$, the distribution evolves to a new distribution,
\begin{equation}
\label{eq:prop}
p_{t+\tau}(\mathbf{y}) = \int_\Omega d\mathbf{x} \; p(\mathbf{x}, \mathbf{y}; \tau) \, p_t(\mathbf{x}) = \mathcal{P}(\tau) \circ p_t(\mathbf{y}),
\end{equation}
which defines a continuous integral operator, $\mathcal{P}(\tau)$,
called the propagator with lag time $\tau$. The propagator,
$\mathcal{P}(\tau)$, admits a natural decomposition in terms of
its eigenfunctions and eigenvalues
\begin{equation}
\label{eq:eig}
\mathcal{P}(\tau) \circ \phi_i = \lambda_i \phi_i.
\end{equation}
Furthermore, due to the reversibility of the underlying dynamics, $\mathcal{P}(\tau)$ is compact and self-adjoint with respect the
a $\mu^{-1}$ weighted scalar product,\cite{schutte98conformational}
\begin{align}
\langle f, g \rangle_{\mu^{-1}} = \int_\Omega
d\mathbf{x}\; f(\mathbf{x}) g(\mathbf{x}) \mu^{-1}(\mathbf{x}),
\end{align}
where $f$ and $g$ are arbitrary scalar functions on $\Omega$. The propagator has a unique largest eigenvalue $\lambda_1=1$ with corresponding
eigenfunction $\phi_1(\mathbf{x}) = \mu(\mathbf{x})$. The remaining
eigenvalues are real and positive, can be sorted in descending
order, and can be normalized to be $\mu^{-1}$-orthonormal. Using the spectral decomposition of $\mathcal{P}(\tau)$, the conformational distribution of an ensemble at arbitrary multiples of $\tau$ can be
written as a sum of exponentially decaying relaxation processes
\begin{align}
\label{eq:expansion}
p_{t+k\tau}(\mathbf{x}) &= \sum_{i=1}^\infty \lambda_i^k \langle p_t, \phi_i \rangle_{\mu^{-1}} \phi_i, \\
&= \mu(\mathbf{x}) + \sum_{i=2}^\infty \exp \left(-\frac{k\tau}{t_i}\right) \langle p_t, \phi_i \rangle_{\mu^{-1}} \phi_i,
\end{align}
where $t_i = -\displaystyle\frac{\tau}{\ln \lambda_i}$. The
eigenfunctions $\phi_i(\mathbf{x})$ for $i = 2, ...$ can thus be
interpreted as dynamical modes, along each of which the system relaxes towards equilibrium with a characteristic timescale, $\tau_i$. Many molecular systems are characterized by $m$ individual \emph{slow} timescales with eigenvalues close to one, separated from the remaining eigenvalues by a \emph{spectral
  gap}. These slowest collective degrees of freedom, such as protein folding coordinates and pathways associated with enzyme activation/deactivation, are often identified with key functions in biological systems. The remaining small eigenvalues correspond to faster dynamical processes that rapidly decay to equilibrium. Under these conditions, the long-time dynamics induced by the propagator can be well described by consideration of only these slow eigenfunctions -- that is, a rank-$m$ low-rank approximation to the propagator.

Furthermore, not only do these slow eigenfunctions form a convenient basis, in fact they lead to an \emph{optimal} reduced-rank description of the dynamics. That is, each of the partial sums formed by truncating the expansion in \cref{eq:expansion} at its first $m$ terms is is the closest possible rank-$m$ approximation to $\mathcal{P}(\tau)$ in spectral norm. This statement is made precise by the following theorem.

\begin{theorem}
\label{thm:1}
Let $\mathcal{P}$ be compact linear operator which is self-adjoint with respect to an inner product $\langle \cdot, \cdot \rangle_{\mu^{-1}}$. Assume that the eigenvalues $\lambda_i$ and associated eigenfunctions $\phi_i$ are sorted in descending order by eigenvalue. Define the rank-$m$ operator $\mathcal{P}_m$ such that $ \mathcal{P}_m \circ f = \sum_{i=1}^m \lambda_i \langle f, \phi_i \rangle_{\mu^{-1}} \phi_i$, and let $\mathcal{A}_m$ be an arbitrary rank $m$ operator. Then,
\begin{equation}
\mathcal{P}_m = \argmin_{\mathrm{rank}(\mathcal{A}_m) \leq m} ||\mathcal{A}_m - \mathcal{P}||_{\mu^{-1}}.
\end{equation}
\end{theorem}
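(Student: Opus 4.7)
The plan is to prove this as the compact self-adjoint analogue of the Eckart--Young--Mirsky low-rank approximation theorem, working on the Hilbert space induced by $\langle\cdot,\cdot\rangle_{\mu^{-1}}$. Since $\mathcal{P}$ is compact and self-adjoint in this inner product, the spectral theorem supplies a $\mu^{-1}$-orthonormal eigenbasis $\{\phi_i\}$ with real eigenvalues $\lambda_1 \geq \lambda_2 \geq \cdots \geq 0$ accumulating only at zero. I would split the argument into a matching upper bound and lower bound on $\|\mathcal{P}-\mathcal{A}_m\|_{\mu^{-1}}$, both equal to $\lambda_{m+1}$.

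For the upper bound, I would use the spectral decomposition to rewrite the residual as $(\mathcal{P}-\mathcal{P}_m)f = \sum_{i>m}\lambda_i\langle f,\phi_i\rangle_{\mu^{-1}}\phi_i$. Applying Parseval to an arbitrary unit-norm $f$ gives
\begin{equation}
\|(\mathcal{P}-\mathcal{P}_m)f\|_{\mu^{-1}}^2 = \sum_{i>m}\lambda_i^2\,|\langle f,\phi_i\rangle_{\mu^{-1}}|^2 \leq \lambda_{m+1}^2,
\end{equation}
with equality achieved at $f=\phi_{m+1}$, so $\|\mathcal{P}-\mathcal{P}_m\|_{\mu^{-1}}=\lambda_{m+1}$.

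For the lower bound I would deploy the standard dimension-counting trick. Let $V=\mathrm{span}(\phi_1,\ldots,\phi_{m+1})$, an $(m+1)$-dimensional subspace. Because $\mathcal{A}_m|_V$ maps $V$ into a range of dimension at most $m$, the rank--nullity theorem for the restricted map guarantees a nonzero $f^* = \sum_{i=1}^{m+1} c_i\phi_i$ with $\sum c_i^2 = 1$ and $\mathcal{A}_m f^* = 0$. Then $(\mathcal{P}-\mathcal{A}_m)f^* = \mathcal{P}f^* = \sum_{i=1}^{m+1} c_i\lambda_i\phi_i$, and by $\mu^{-1}$-orthonormality its squared norm equals $\sum_{i=1}^{m+1} c_i^2\lambda_i^2 \geq \lambda_{m+1}^2$, using $\lambda_i \geq \lambda_{m+1} \geq 0$ on $i\leq m+1$. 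Thus $\|\mathcal{P}-\mathcal{A}_m\|_{\mu^{-1}} \geq \lambda_{m+1}$, matching the value attained by $\mathcal{P}_m$ and proving the minimization claim.

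The main obstacle is the lower bound: rather than attempting a direct variational minimization over the unwieldy set of rank-$m$ operators, one must recognize that the pigeonhole argument on the $(m+1)$-dimensional top eigenspace converts the infinite-dimensional problem into an elementary finite-dimensional one. Subsidiary checks include confirming that $\|\cdot\|_{\mu^{-1}}$ in the statement is the operator norm induced by $\langle\cdot,\cdot\rangle_{\mu^{-1}}$ (and not, say, the Hilbert--Schmidt norm, for which the Mirsky extension would require a nearly identical but distinct argument involving tail sums of squared eigenvalues), and invoking compactness of $\mathcal{P}$ to ensure the spectrum is discrete so the ordering of $\lambda_i$ is well-defined.
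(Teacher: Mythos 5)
Your argument is correct and complete where the paper gives no proof at all: the paper's ``proof'' of \cref{thm:1} is a citation to Schmidt, Courant--Hilbert, and Micchelli, and what you have written out is precisely the classical two-sided argument from that literature --- the upper bound $\|\mathcal{P}-\mathcal{P}_m\|_{\mu^{-1}}=\lambda_{m+1}$ via Parseval on the spectral tail, and the lower bound for an arbitrary rank-$m$ competitor via rank--nullity applied to $\mathcal{A}_m$ restricted to $\mathrm{span}(\phi_1,\ldots,\phi_{m+1})$. You also correctly identified that the relevant norm is the operator (spectral) norm, which the surrounding text confirms. The one point you should make explicit rather than assert: the chain $\lambda_1\geq\lambda_2\geq\cdots\geq 0$ does \emph{not} follow from compactness and self-adjointness alone, since a self-adjoint operator may have negative eigenvalues, and with eigenvalues sorted by value rather than by modulus the statement as written would then fail (eigenvalues $1$ and $-2$ with $m=1$ give a counterexample, since discarding $-2$ leaves a residual of norm $2$ while discarding $1$ leaves norm $1$). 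Both of your key inequalities, $\sum_{i>m}\lambda_i^2|\langle f,\phi_i\rangle_{\mu^{-1}}|^2\leq\lambda_{m+1}^2$ and $\sum_{i=1}^{m+1}c_i^2\lambda_i^2\geq\lambda_{m+1}^2$, rely on nonnegativity. The theorem is safe in the paper's setting because the propagator of a reversible process has real positive eigenvalues, but your proof should state this hypothesis (or sort by absolute value) rather than fold it silently into the spectral theorem.
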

\begin{proof}
This is the extension of the familiar Eckart-Young theorem to self-adjoint linear operators. The original result is by Schmidt.\cite{schmidt1907zur} See Courant and Hilbert (pp 161),\cite{courant2008methods} and \citet{micchelli1971some} for further details.
\end{proof}
While \cref{thm:1} is a statement about operator approximation, it can also be viewed as a statement about optimal dimensionality reduction for description of slow dynamics. Over all $m$-dimensional dimensionality reductions, the one which projects the dynamics onto its first $m$ propagator eigenfunctions preserves to the largest degree information about the long-timescale evolution of the original system.

Note however that rank-constrained propagator, $\mathcal{P}_m$, while optimal by spectral norm is not generally positivity-preserving, as proved in \cref{sect:tension}, which is an important property of the propagator necessary for its probabilistic interpretation in \cref{eq:prop}.

\section{Objective function and subspace variational principle}
In this section we introduce the objective function discussed abstractly in \cref{sect:2}. We show that both the existing time-structure independent components analysis (tICA)\cite{schwantes2013Improvements, perezhernandez2013Identification} and Markov state model (MSM)\cite{noe2008transition, bowman2009progress, pande2010everything, prinz2011markov, chodera2014markov} methods can be interpreted as procedures which directly optimize this criteria using different restricted families of basis functions. Furthermore, we show that in the infinite-data limit, when the requisite matrix elements can be computed without error, a variational bound governs this objective function: \emph{ansatz} eigenfunctions, $\hat{\phi}_{1:m}$, which differ from the propagator's true eigenfunctions, $\phi_{1:m}$, are always assigned a score which is less than the score of the true eigenfunctions.

Unfortunately, in the more typical finite-data regime, this variational bound can be violated in a pernicious manner: as the size of the basis set increases past some threshold, models can give continually-increasing training set performance (even breaking the variational bound), even as they get \emph{less} accurate when measured on independent test data sets. This observation underscores the practical value of cross-validation in estimators for the slow dynamical processes in molecular kinetics.

Our results build on the important contributions of \citet{noe2013variational} and \citet{nuske2014variational}, who introduced a closely related variational approach for characterizing the slow dynamics in molecular systems. Our novel contribution stems from an approach to the problem through the lens of cross-validation, with its need for a single scalar objective function. While previous work focuses on estimators of each of the propagator eigenfunctions, $\phi_i$, one at at time, we focus instead on measures related to the simultaneous estimation of all of the first $m$ eigenfunctions collectively.

\begin{theorem}
\label{thm:2}
Let $\mathcal{P}$ be compact linear operator whose eigenvalues $\lambda_1 > \lambda_2 \geq \lambda_3, \dots$ are bounded from above and which is self-adjoint with respect to an inner product $\langle \cdot, \cdot \rangle_{\mu^{-1}}$. Furthermore, let $f$ be an arbitrary set of $m$ linearly independent functions on $\Omega \rightarrow \mathbb{R}$, $f=\{f_i(\cdot)\}_{i=1}^m$. Let $\mathbb{S}^{m}$ and $\mathbb{S}^{m}_{++}$ be the space of $m \times m$ real symmetric matrices and positive definite matrices respectively. Define a matrix $P \in \mathbb{S}^{m}$ with $P_{ij} = \langle f_i, \mathcal{P} \circ f_j \rangle_{\mu^{-1}}$, and a matrix $Q \in \mathbb{S}^{m}_{++}$ with $Q_{ij} = \langle f_i, f_j \rangle_{\mu^{-1}}$. Define $\mathcal{R}_\mathcal{P}[f]$ as
\begin{equation}
\mathcal{R}_\mathcal{P}[f] = \Tr\big( PQ^{-1}\big).
\end{equation}
Then,
\begin{equation}
\label{th:varspace}
\mathcal{R}_\mathcal{P}[f] \leq \sum_{i=1}^m \lambda_i.
\end{equation}
\end{theorem}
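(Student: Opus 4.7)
My plan is to exploit the basis-invariance of the Rayleigh trace and then reduce the problem to a Ky Fan-type maximum principle for self-adjoint compact operators.

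First, I would observe that $\mathcal{R}_\mathcal{P}[f]$ depends only on the $m$-dimensional subspace $V = \mathrm{span}(f_1,\dots,f_m)$, not on the particular basis chosen. Concretely, if I replace $f$ by $\tilde{f} = fA$ for some invertible $A \in \mathbb{R}^{m\times m}$, then by bilinearity $P \mapsto A^{T}PA$ and $Q \mapsto A^{T}QA$, so
\begin{equation}
\Tr\bigl(A^{T}PA\,(A^{T}QA)^{-1}\bigr) = \Tr\bigl(A^{T}PA\,A^{-1}Q^{-1}A^{-T}\bigr) = \Tr(PQ^{-1}).
\end{equation}
Since $Q \in \mathbb{S}^{m}_{++}$, I can pick $A = Q^{-1/2}$, producing a basis $\{\tilde{f}_i\}$ that is $\mu^{-1}$-orthonormal. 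In this gauge $Q = I$ and $\mathcal{R}_\mathcal{P}[f] = \sum_{i=1}^{m} \langle \tilde{f}_i, \mathcal{P}\circ \tilde{f}_i\rangle_{\mu^{-1}}$, so the claim reduces to showing that for any $\mu^{-1}$-orthonormal set $\{\tilde{f}_i\}_{i=1}^m$,
\begin{equation}
\sum_{i=1}^{m} \langle \tilde{f}_i, \mathcal{P}\circ \tilde{f}_i\rangle_{\mu^{-1}} \leq \sum_{i=1}^{m}\lambda_i.
\end{equation}

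Next, I would expand each $\tilde{f}_i$ in the $\mu^{-1}$-orthonormal eigenbasis $\{\phi_j\}$ of $\mathcal{P}$, writing $\tilde{f}_i = \sum_{j\geq 1} c_{ij}\phi_j$. Using $\mathcal{P}\phi_j = \lambda_j \phi_j$ and orthonormality gives
\begin{equation}
\sum_{i=1}^{m}\langle \tilde{f}_i, \mathcal{P}\tilde{f}_i\rangle_{\mu^{-1}} = \sum_{j}\lambda_j\, w_j, \qquad w_j := \sum_{i=1}^{m} c_{ij}^{2}.
\end{equation}
The $\mu^{-1}$-orthonormality of $\{\tilde{f}_i\}$ states that the (possibly infinite) matrix $C=(c_{ij})$ has $CC^{T}=I_m$; hence $C^{T}C$ is an orthogonal projection of rank $m$, whose diagonal entries $w_j$ therefore satisfy $0 \leq w_j \leq 1$ and $\sum_j w_j = \Tr(C^{T}C) = m$.

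Finally, I would conclude with the elementary rearrangement: among all nonnegative weights $\{w_j\}$ with $w_j \leq 1$ and $\sum_j w_j = m$, the linear functional $\sum_j \lambda_j w_j$ (with $\lambda_1 \geq \lambda_2 \geq \dots$) is maximized by concentrating all the mass on the $m$ largest eigenvalues, i.e.\ $w_1=\dots=w_m=1$ and the rest zero, giving the upper bound $\sum_{i=1}^{m}\lambda_i$. Combining the two reductions yields \cref{th:varspace}. The only subtle point is the step establishing $0 \leq w_j \leq 1$ and $\sum_j w_j = m$ for a compact operator with a countable eigenbasis; this requires care since $C$ is an $m\times\infty$ matrix, but the compactness and self-adjointness of $\mathcal{P}$ ensure a complete orthonormal eigenbasis so that the expansions and trace identities above are rigorously justified.
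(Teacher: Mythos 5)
Your argument is correct and its skeleton is the same as the paper's: expand the $f_i$ in the $\mu^{-1}$-orthonormal eigenbasis of $\mathcal{P}$ so that $P = W^T D(\lambda) W$ and $Q = W^T W$, use the positive-definite square root $Q^{-1/2}$ to pass to a $\mu^{-1}$-orthonormal basis of $\mathrm{span}(f)$ (the paper's $B = WQ^{-1/2}$ with $B^T B = I_m$), and then bound $\Tr\big(B^T D(\lambda) B\big)$ by $\sum_{i=1}^m \lambda_i$. The one genuine difference is the final step: the paper cites the Ky Fan maximum principle as a black box, whereas you prove it inline by noting that $C^T C$ is a rank-$m$ orthogonal projection, so the weights $w_j$ lie in $[0,1]$ and sum to $m$, and then maximizing the linear functional $\sum_j \lambda_j w_j$ over that polytope. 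This makes your write-up self-contained (it is essentially the standard proof of the Ky Fan inequality), at the cost of the care you already flag concerning the $m\times\infty$ coefficient matrix; that care is warranted, and the spectral theorem for compact self-adjoint operators justifies the expansions and trace identities, exactly as it implicitly does in the paper. Both arguments also share the same tacit hypothesis that the eigenvalues admit a descending enumeration $\lambda_1 \geq \lambda_2 \geq \cdots$ (as they do for the propagator, whose nonzero eigenvalues are positive); neither proof would survive an operator with infinitely many negative eigenvalues, but that is a defect of the theorem's phrasing rather than of your argument.
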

\begin{lemma}
\label{lemma}
The equality in \cref{th:varspace} holds for $f = \{\phi_1, \phi_2, \ldots, \phi_m\}$, and any set of $m$ functions, $f$, such that $\mathrm{span}(f) = \mathrm{span}(\{\phi_1, \phi_2, \ldots, \phi_m\})$.
\end{lemma}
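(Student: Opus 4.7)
The plan is to split the argument into two steps. In the first step I would check directly that $f = \{\phi_1, \ldots, \phi_m\}$ saturates the bound, using the $\mu^{-1}$-orthonormality of the propagator eigenfunctions (noted in the Theory Background) together with the eigenvalue equation $\mathcal{P} \circ \phi_j = \lambda_j \phi_j$. Orthonormality immediately collapses $Q$ to the identity, and the eigenvalue equation combined with orthonormality reduces $P$ to $\mathrm{diag}(\lambda_1, \ldots, \lambda_m)$. Taking the trace of $PQ^{-1} = P$ then produces $\sum_{i=1}^m \lambda_i$. This is essentially a one-line calculation and is where the equality statement really comes from.

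In the second step I would upgrade this to the full span-invariance claim by showing that the functional $f \mapsto \Tr(PQ^{-1})$ depends only on $\mathrm{span}(f)$, not on the particular choice of basis. If $f'$ is another basis of $\mathrm{span}(\phi_1, \ldots, \phi_m)$, there is a unique invertible $m \times m$ matrix $C$ relating the two, and substituting the expansion $f'_j = \sum_k C_{kj} f_k$ into the bilinear definitions of $P$ and $Q$ yields the congruences $P' = C^T P C$ and $Q' = C^T Q C$. The $C$ factors then cancel in $P'(Q')^{-1} = C^T(PQ^{-1})C^{-T}$ after a single application of cyclicity of the trace, so $\Tr(P'(Q')^{-1}) = \Tr(PQ^{-1})$. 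Combining with step one gives equality for every basis of the top eigenspace.

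I do not anticipate a genuine obstacle here; both steps are elementary. The only items deserving explicit acknowledgment are (i) invertibility of $Q$ and $Q'$, which is built into the hypothesis $Q \in \mathbb{S}^m_{++}$ together with the linear independence of $f$, and (ii) the existence of the change-of-basis matrix $C$, which is standard linear algebra once both sets are known to be bases of the same $m$-dimensional subspace. The conceptual takeaway, useful for contextualizing Theorem~\ref{thm:2}, is that $\Tr(PQ^{-1})$ is precisely the trace of the Galerkin/Rayleigh--Ritz representation of $\mathcal{P}$ restricted to $\mathrm{span}(f)$, and therefore a coordinate-free invariant of that subspace; when the subspace coincides with the leading $m$-dimensional eigenspace of $\mathcal{P}$, this invariant collects exactly the top $m$ eigenvalues.
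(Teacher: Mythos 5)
Your proposal is correct and follows essentially the same route as the paper: the paper's appendix establishes the span-invariance of $\Tr(PQ^{-1})$ by exactly the congruence argument $P' = M^T P M$, $Q' = M^T Q M$ (phrased via the expansion-coefficient matrix $W' = WM$) plus cyclicity of the trace, and the saturation at $f = \{\phi_1,\ldots,\phi_m\}$ is the same one-line orthonormality computation, which the paper records as the equality case of the Ky Fan theorem at the end of the proof of Theorem~\ref{thm:2}. Your explicit verification that $Q = I_m$ and $P = \mathrm{diag}(\lambda_1,\ldots,\lambda_m)$ is a slightly more self-contained presentation of that equality case, but not a different argument.
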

The proof of \cref{thm:2} follows from the Ky Fan theorem.\cite{fan1949theorem, overton1992sum} Its proof, as well as the proof of \cref{lemma} can be found in \cref{appendix:proof}.

This result implies that the slow eigenspace of the molecular propagator can be numerically determined by simultaneously varying a set of \emph{ansatz} functions $f$ to maximize $\mathcal{R}_\mathcal{P}[f]$. If the maxima is found, then $f$ are the desired eigenfunctions, up to a rotation. The matrix $P$ has the form of a time-lagged covariance matrix between the \emph{ansatz} functions, describing the tendency of the system to move from regions of phase space strongly associated one \emph{ansatz} function to another in time $\tau$. The matrix $Q$ acts like a normalization, giving the equilibrium overlap between \emph{ansatz} functions. Note that when the trial functions, $f$, are $\mu^{-1}$-orthonormal, $Q$, is simply the identity. Under these conditions, $\mathcal{R_P}[f]$ then assumes a simple form as the sum of the individual Ritz values of the trial functions.

Physically, $\mathcal{R}_\mathcal{P}[f]$ can be interpreted as the ``slowness'' of the lower-dimensional dynamical process formed by projecting a high-dimensional process through the $m$ \emph{ansatz} functions. The maximization of $\mathcal{R}_\mathcal{P}[f]$ corresponds to a search for the coordinates along which the system decorrelates as slowly as possible.

Because it is bounded by the sum of the first $m$ true eigenfunctions of the propagator, $\mathcal{R}_\mathcal{P}[f]$, is the foundation of the sought objective function for cross-validation of estimators for the slow dynamical modes in molecular kinetics. Unfortunately, it cannot be calculated exactly  from a molecular dynamics simulation. Next we show how the requisite matrix elements, $P_{ij}$ and $Q_{ij}$ can be approximated from MD. The noise in these approximations will be a function of both the amount of available simulation data and the size and flexibility of a basis set, leading to the bias variance tradeoff discussed earlier. By the continuous mapping theorem and \cref{thm:2}, consistency of the objective function (in the sense of \cref{eq:consistent}) is established if these estimators for $P$ and $Q$ are consistent.

\subsection{Basis Function Expansion}
Equipped with this variational theorem, we now consider the construction of an approximation to the dominant eigenspace of $\mathcal{P}(\tau)$ using linear combinations of functions from a finite basis set. This reduces the problem of searching over the space of sets of $m$ functions to a problem of finding a weight matrix that linearly mixes the basis functions.

Let $\{\varphi_a\}_{a=1}^n$ be a set of $n$ functions on $\Omega \rightarrow \mathbb{R}$, which will be used as basis functions in which to expand the slowest $m$ propagator eigenfunctions. Physically motived basis functions for protein dynamics might include measurements of protein backbone dihedral angles, the distances between particular atoms, or similar structural coordinates. The basis can also be indicator functions for specific regions of phase space -- the ``Markov states'' in a MSM.

Following \citet{nuske2014variational}, we expand the $m$ \emph{ansatz} eigenfunctions as $\mu$-weighted linear combinations of the basis functions,
$f_i(\mathbf{x}) = \sum_a A_{ia}\mu(\mathbf{x})\varphi_a(\mathbf{x})$, where $A \in \mathbb{R}^{n \times m}$ is a weight matrix of arbitrary expansion coefficients. From the basis functions, we define the time-lagged covariance and overlap matrices $C \in \mathbb{S}^n$ and $S \in \mathbb{S}^n_{++}$ respectively such that $C_{ij} = \langle \mu \varphi_i,\; \mathcal{P} \circ \mu \varphi_j \rangle_{\mu^{-1}}$ and $S_{ij} =\langle \mu \varphi_i,\; \mu \varphi_j \rangle_{\mu^{-1}}$.

Then, by exploiting the linearity of the basis function expansion, the matrices $P$ and $Q$, can be written as matrix products involving the expansion coefficients, correlation and overlap matrices.

\begin{align}
P = A^T C A \\
Q = A^T S A
\end{align}

These equations can be interpreted in a simple way: the time-lagged correlation and overlap of the \emph{ansatz} functions with respect to on another can be formed from two similar matrices involving only the basis functions, $C$ and $S$, and the expansion coefficients, $A$. When the \emph{ansatz} functions, $f$, are constructed this way, $\mathcal{R}_\mathcal{P}[f]$ reduces to the generalized matrix Rayleigh quotient (GMRQ), $\mathcal{R}_\mathcal{P}[f] = \mathcal{R}(A; C, S) = \mathcal{R}(A)$

\begin{align}
\label{eq:ra}
\mathcal{R}(A) \equiv \Tr \big( A^T C A \, (A^T S A)^{-1} \big)
\end{align}

Following \cref{lemma}, we note that $\mathcal{R}(A)$ is a function only of column span of $A$, and is not affected by rescaling, or the application of any invertible transformation of the columns. Therefore, the optimization of $\mathcal{R}(A)$ can be seen as a single optimization problem over the set of all $m$-dimensional linear subspaces of $\Omega$. This space is referred to as a \emph{Grassmann manifold}.\cite{absil2002grassmann} Note that when $m=1$, $P$ and $Q$ are scalars, and $\mathcal{R}(A)$ reduces to a standard generalized Rayleigh quotient.

Furthermore, with fixed basis functions, the training problem, $A^* =
\argmax_A \mathcal{R}(A; C, S)$, is solved directly by a matrix $A^*$
with columns that are the $m$ generalized eigenvectors of $C$ and $S$
with the largest eigenvalues, and this eigenproblem is identical to
the one introduced for the tICA
method,\cite{schwantes2013Improvements,
  perezhernandez2013Identification} and Ritz method.\cite{noe2013variational}

\subsection{Estimation of matrix elements from MD}
\label{sect:msm}
Equipped with a collection of basis functions, $\{\varphi\}$, how can $C$ and $S$ be estimated from an MD dataset? As previously shown by \citet{nuske2014variational}, the matrix elements $C$ and $S$ can be estimated from an equilibrium molecular dynamics simulations, $X = \{\mathbf{x}_t\}_{t=1}^T$, by exploiting the ergodic theorem and measuring the correlation between the basis functions, with or without a time lag.

\begin{align}
C_{ij} &= \langle \mu \varphi_i, \mathcal{P}(\tau) \circ \mu\varphi_j \rangle_{\mu^{-1}} \\
&= \int_{x \in \Omega} \int_{y \in \Omega}\, d\mathbf{x}\; d\mathbf{y}\; \mu(\mathbf{y}) \; \varphi_i(\mathbf{y}) \; p(\mathbf{x}, \mathbf{y}; \tau) \; \varphi_j(\mathbf{x}) \\
&\approx \frac{1}{T-\tau}\sum_{t=1}^{T-\tau} \varphi_i(\mathbf{x}_{t}) \varphi_j(\mathbf{x}_{t+\tau}) \label{eq:chat} \\
S_{ij} &= \langle \mu \varphi_i, \mu \varphi_j \rangle_{\mu^{-1}} \\
&= \int_{x \in \Omega} \varphi_i(\mathbf{x}) \varphi_j(\mathbf{x}) \mu(\mathbf{x}) \\
&\approx \frac{1}{T}\sum_{t=1}^T \varphi_i(\mathbf{x}_t) \varphi_j(\mathbf{x}_t) \label{eq:shat}
\end{align}

Note that for \cref{thm:2} to be applicable, $C$ is required to be symmetric, a property which is likely to be violated by the estimator \cref{eq:chat}. For this reason, in practice we use an estimator that averages the matrix computed in \cref{eq:chat} with its transpose. We call this method transpose symmetrization, and it amounts to including each trajectory twice in the dataset, once in the forward and once in the reversed direction, as discussed in Schwantes and Pande.\cite{schwantes2013Improvements}

Markov state models (MSMs)\cite{noe2008transition, bowman2009progress, pande2010everything, prinz2011markov, chodera2014markov} are particular case of the proposed method that have been widely applied to the analysis of biomolecular simulations\cite{sezer2008using, muff2009identification, buch2011complete, voelz2010molecular, Beauchamp2012simple, zhuang2011simulating,  sadiq2012kinetic, choudhary2014structure, shukla2014activation}, where the basis functions are chosen to indicator functions on a collection of non-overlapping subsets of the conformation space.
Given a set of discrete non-overlapping states which partition $\Omega$, $\mathcal{S} = \{s_i\}_{i=1}^{n}$, such that $s_i \subseteq \Omega$, $\bigcup_{i=1}^{n} s_i = \Omega$, and $s_i \cap s_j = \emptyset$, and define

\begin{align}
\label{eq:MSM}
\varphi_i^\text{MSM}(\mathbf{x}_t) = \begin{cases} 1, &\text{if $\mathbf{x}_t\in s_i$}.\\ 0, &\text{otherwise}.\end{cases}
\end{align}

Using this basis set, as previously shown by \citet{nuske2014variational}, estimates of the correlation matrix elements $C_{ij}$ can be obtained following \cref{eq:chat} by counting the number of observed transitions between sets $s_i$ and $s_j$. The overlap matrix $S$ is diagonal with entries, $S_{ii}$, that estimate the stationary probabilities of the sets, $S_{ii} \approx \int_{\mathbf{x} \in s_i}d\mathbf{x}\, \mu(\mathbf{x})$.

For the particular case of MSM basis sets, in contrast to the somewhat crude
transpose symmetrization method, a more elegant enforcement of symmetry of $C$ can be accomplished via a maximum likelihood estimator following Algorithm 1 of Prinz \emph{et al}.\cite{prinz2011markov}

Equipped with these estimators for $C$ and $S$ from MD data, \cref{eq:ra} now has a form which is suitable for use as a cross-validation objective function, $O(\hat{\phi}_{1:m}, X')$. The proposed eigenfunctions, which may have been trained on a \emph{different} dataset, are numerically represented by expansion coefficients, $\hat{A}$, and $C$ and $S$ act as sufficient statistics from the test dataset $X'$; the GMRQ objective function is $\mathcal{R}(\hat{A}; C(X'), S(X'))$.

\section{Algorithmic Realization}

The central practical purpose of cross-validation with generalized matrix Rayleigh quotient (GMRQ) is, given an MD dataset, to select a set of appropriate basis functions with which to construct Markov state models (MSMs) for system's kinetics. Note that \cref{eq:MSM} leaves substantial flexibility in the definition of the basis set, since the partitioning of phase space into states is left unspecified.

Methodologies for constructing these states include clustering the conformations in the dataset using a variety of distance metrics, clustering algorithms, and dimensionality reduction techniques. Let $\theta$ be a variable
containing the settings for these procedures, which parameterizes a function, $g^\text{MSM}_\theta(X)$, that, given a collection of MD trajectories, returns a set of $n$ states, $\mathcal{S}$.

Procedurally, GMRQ-based cross-validation for MSMs is a protocol for assigning a scalar score, $MCV(\theta)$, to the MSM hyperparameters, $\theta$, with the following steps.
\begin{enumerate}
  \item Separate the full data set into $k$ disjoint folds, as described in \cref{sect:2}.
  \item For each fold, $i$, use the training data set, $X^{(-i)}$, to construct a set of states, $\mathcal{S}^{(-i)} = g_\theta(X^{(-i)})$.
  \item Use the states $\mathcal{S}^{(-i)}$ and the training data set $X^{(-i)}$ to build a Markov state model. This entails clustering the dataset to obtain the basis functions (states), $\{\varphi\}$, estimating the training set correlation and overlap matrices $C^{(-i)}$ and $S^{(-i)}$ from the trajectories, and computing their first $m$ generalized eigenvectors, $A=\argmax_A \mathcal{R}(A; C^{-(i)}, S^{(-i)})$, with a standard generalized symmetric eigensolver (e.g. LAPACK's \textsc{dsygv} subroutine).\cite{lapack99}
  \item These eigenvectors maximize the GMRQ on the training set, but how do they perform when tested on new data? Using the test set data, $X^{(i)}$, and the states, $\mathcal{S}^{(-i)}$, \emph{as determined from the training set}, compute the test set correlation and overlap matrices, $C^{(i)}$ and $S^{(i)}$. These trained eigenvectors, $A$, are scored on the test set by $\mathcal{R}(A;C^{(i)}, S^{(i)})$. The key metric for model selection, the cross-validation mean test set generalized matrix Rayleigh quotient is
  \begin{align}
    MVC(\theta) = k^{-1}\sum_{i=1}^k \mathcal{R}(A;\; C^{(i)}, S^{(i)}).
  \end{align}
 As an overfitting diagnostic, we also calculate the cross-validation mean training set GMRQ,
  \begin{align}
    MVC'(\theta) = k^{-1}\sum_{i=1}^k \mathcal{R}(A; C^{(-i)}, S^{(-i)}).
  \end{align}
  \item Finally, the entire procedure is repeated for many choices of $\theta$,
        and the hyperparameter set that maximize the mean cross validation
        score is chosen as the best model, $\theta^\star = \argmax_\theta MVC(\theta)$.
\end{enumerate}

For this approach, one symptom of overfitting -- the construction of models that describe the statistical noise in $X$ rather than the underlying slow dynamical processes -- is an overestimation of the eigenvalues of the propagator and overestimation of the GMRQ. Related statistical methods, such as kernel principal components analysis which also involve spectral analysis of integral operators under non-negligible statistical error suffer from the same effect, which has been termed variance inflation.\cite{scholkopf1998nonlinear, kjems2000generalizable,
  abrahamsen2011cure}

Left unspecified in this protocol are three important parameters: the degree of cross validation, $k$, the number of desired eigenfunctions, $m$, and the correlation lag time, $\tau$. In our experiments, following common practice in the machine learning community, we use $k=5$. Especially in the data-limited regime, the tradeoffs involving the choice of $k$ are not entirely clear, as the objective lacks the form of an empirical risk minimization problem.\cite{vapnik1998statistical, cornec2010concentration} For MSMs, substantial attention in the literature has been paid to the selection of the lag time, $\tau$.\cite{park2006validation, pande2010everything, prinz2011markov} With fixed basis function, it has been shown that the eigenfunction approximation error is a decreasing function of the $\tau$, which motivates the use of larger values.\cite{sarich2010approximation} On the other hand, larger values of $\tau$ limit the temporal resolution of the model. For MSMs of protein folding, the authors' experience suggest that appropriate values for $\tau$ are often in the range between 1 and 100 nanoseconds. Finally, we suggest that $m$, the rank of GMRQ, be selected based on the number of slow dynamical processes in the system, as determined by an apparent gap in the eigenvalue spectrum of $\mathcal{P}(\tau)$, or heuristically to a value between 2 and $\sim\! 10$.

\section{Simulations}

\subsection{Double Well Potential}
In order to gain intuition about the method, we begin by considering one
of simplest possible systems: Brownian dynamics on a double well
potential. We consider a one dimensional Markov process in which a single
particle evolves according to the stochastic differential equation
\begin{equation}
\frac{dx_t}{dt} = -\nabla V(x_t) + \sqrt{2D}R(t)
\label{eq:sde}
\end{equation}
where $V$ is the reduced potential energy, $D$ is the diffusion
constant, and $R(t)$ is a zero-mean delta-correlated stationary
Gaussian process. For simplicity, we consider the potential
\begin{equation}
V(x) = 1 + \cos(2x)
\end{equation}
with reflecting boundary conditions at $x=-\pi$ and $x=\pi$. Using an
Euler integrator, a time step of $\Delta t = 10^{-3}$, and
diffusion constant $D=10^3$, we simulated 10 trajectories starting
from $x=0$ of length $10^{5}$ steps, and saved the position every 100
steps. The potential and histogram of the resulting data points is
shown in \cref{fig:doublewell} (b). We computed the
true eigenvalues of the system's propagator to machine precision by
discretizing the propagator matrix elements on a dense grid (see \cref{appendix:analytic}). The timescale of the slowest relaxation process in this system is $t_2 \approx 7115.3$ steps, and the dataset contains approximately 94 transitions events.

We now consider the construction of Markov state models for this system, and in
particular the selection of the number of states, $n$, using states, $\mathcal{S}=\{s_i\}_{i=1}^n$, which evenly partition the region between $x=-\pi$ and $x=\pi$ into $n$ equally spaced intervals.

\begin{align}
  \label{eq:si}
  s_i = \Big[-\pi + \frac{2\pi}{n}(i-1),\; -\pi + \frac{2\pi}{n}i \Big)
\end{align}

When $n$ is too
low, we expect that the discretization error in the MSM will dominate,
and our basis will not be flexible enough to capture the first
eigenfunction of the propagator. On the other hand, because the number
of parameters estimated by the MSM is proportional to $n^2$, we expect that for
$n$ too large, our models will be overfit. We therefore use 5-fold cross validation with the GMRQ to select the appropriate number of states, balancing these competing effects. The cross-validation GMRQ for the first
two eigenvectors ($m=2$, $\tau=100$ steps) of the MSMs is shown in \cref{fig:doublewell} (a), along with the exact value of the GMRQ. The blue training curve gives the average GMRQ over the folds when scoring the models on the \emph{same} trajectories that they were fit with, and is simply equal to the mean sum of the first two eigenvalues of the MSMs, whereas the red curve shows the mean GMRQ evaluated on the left-out test trajectories.

\begin{figure*}
\centering
\includegraphics[width=468.0pt]{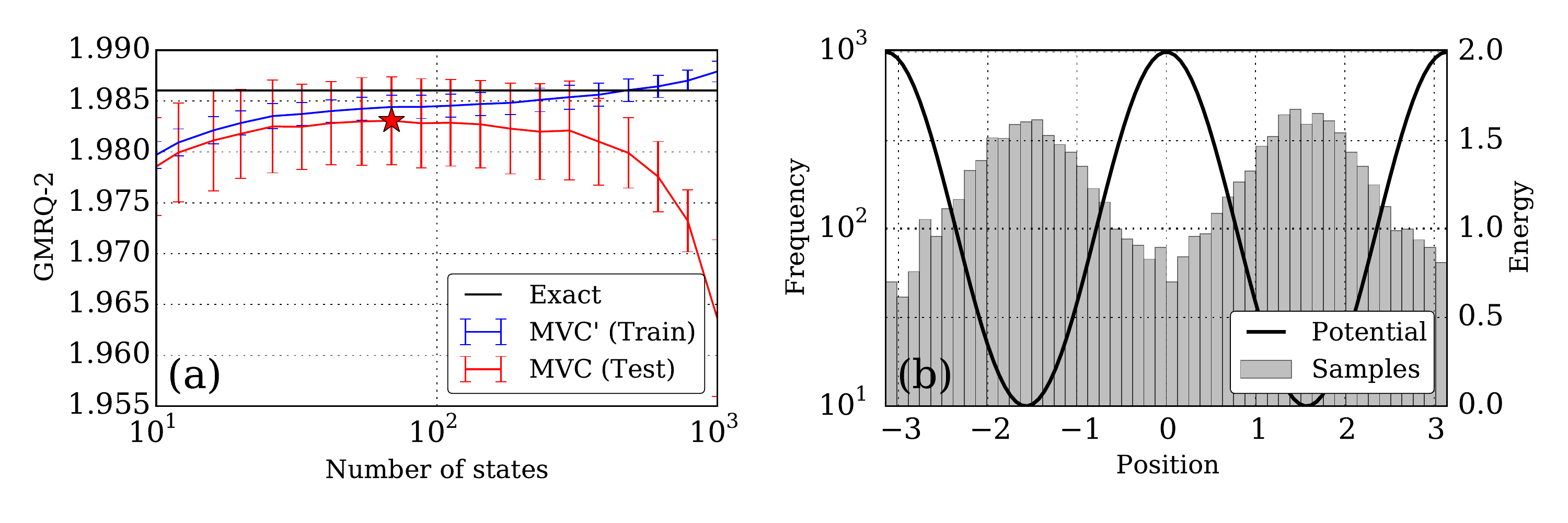}
\caption{\label{fig:doublewell}
Model selection for MSMs of a double well potential. Error bars indicate standard deviations over the 5 folds of cross validation. See text for details.}
\end{figure*}

The training GMRQ increases monotonically, and we note with particular
emphasis that it increases \emph{past} the exact value when using a
large number of states. This indicates that the models built with more
than ~200 states predict \emph{slower} dynamics than the true
propagator. This effect is impossible in the limit of infinite data as demonstrated by \cref{th:varspace} --
it is a direct manifestation of overfitting, and indicates why
straightforward variational optimization without testing on held-out
data or consideration of statistical error fails in a data-limited
regime. The training set eigenvectors, the maximizers of the training set GMRQ, are actually exploiting noise in the dataset more so than modeling the propagator eigenfunctions. On the other hand, the test GMRQ displays an inverted U-shaped behavior and achieves a maximum at $k=61$. These models thus achieve the best \emph{predictive} accuracy in capturing the systems slow dynamics, given the finite data available.


\subsection{Comparison of Clustering Procedures: Octaalanine}
What methods of MSM construction are most robustly able to capture the long-timescale dynamics of protein systems? To address this question, we performed a series of analyses of 27 molecular dynamics trajectories of terminally-blocked octaalanine, a small helix forming peptide. We used 8 different methods to construct the state discretization using clustering with three distance metrics and three clustering algorithms.

For clustering, we considered three distance metrics. The first was the backbone $\phi$ and
$\psi$ dihedral angles. Each conformation was represented by the sine and cosine
of these torsions for a total of 32 features per frame, and distances for
clustering were computed using a Euclidean metric. Second, we considered the distribution of reciprocal interatomic distances (DRID) distance metric introduced by Zhou and
Caflisch,\cite{zhou2012distribution} using the $C\alpha$, $C\beta$, $C$, $N$,
and $O$ atoms in each residue. Finally, we considered the Cartesian minimal
root mean square deviation (RMSD) using the same set of atoms per
residue.\cite{theobald2005rapid} We also considered three clustering algorithms, $k$-centers,\cite{Beauchamp2011Msmbuilder2} a landmark version of UPGMA hierarchical clustering (see \cref{appendix:upgma}), and $k$-means.\cite{Lloyd_1982}

For each pair of distance metric and clustering algorithm (excluding $k$-means \& RMSD which are incompatible),\cite{senne2012emma} we performed 5-fold cross validation using between 10 and 500 states for the clustering. For this experiment, we heuristically chose a lag time of $\tau=10$ ps, and $m=6$, to capture the first five dynamical processes in addition to the stationary distribution. The results are shown in \cref{fig:octaalanine}, with blue curves indicating the mean GMRQ on the training set, and red curves indicating the mean performance on the held-out sets. We find that in all cases, the performance on the training set is \emph{optimistic}, in the sense that the \emph{ansatz} eigenvectors fit during training score more poorly when re{\"e}valuated on held out data. Furthermore, although the training curves all continue to increase with respect to the number of states within the parameter range studied -- which might be interpreted from a variational perspective as the quality of the models continually increasing -- the performance on the test sets tends to peak at a moderate number of states and then decrease. We interpret this as a sign of overfitting: when the number of states is too large, with models fitting the statistical noise in the dataset rather than the underlying slow dynamical processes. Of the parameters studied, the best performance appears to be using the combination of $k$-means clustering with
the dihedral distance metric, using between 50 and 200 states. We also note that $k$-centers appears to yield particularly poor models for all distance metrics, which may be rationalized on the basis that, by design, the algorithm selects outlier conformations to serve as cluster centers.\cite{Beauchamp2011Msmbuilder2}

\begin{figure*}
\centering
\includegraphics[width=468.0pt]{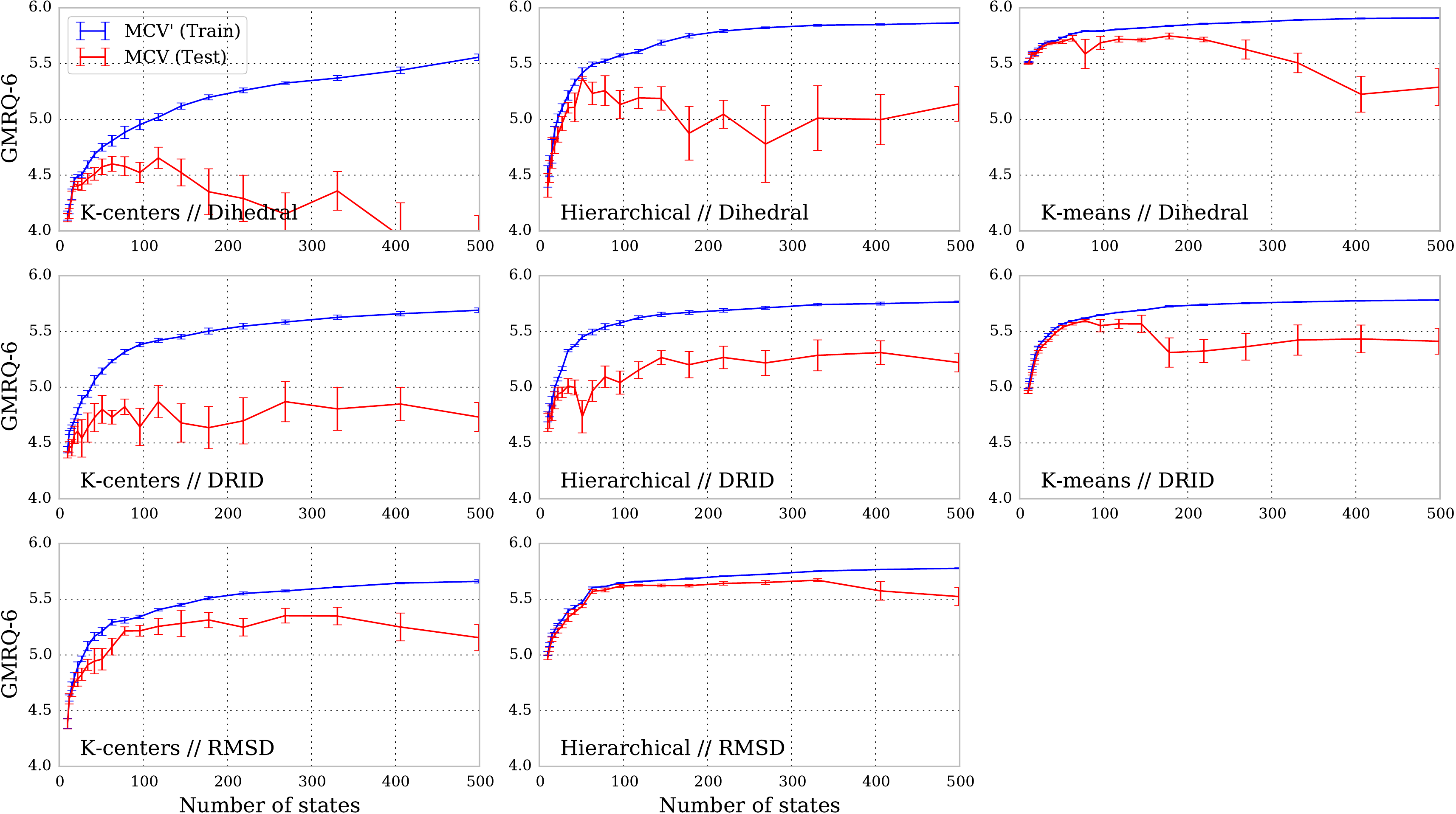}
\caption{\label{fig:octaalanine} Comparison of 8 methods for building MSMs under 5-fold cross validation, evaluated using the rank-6 GMRQ. We used the $k$-centers, $k$-means, and landmark-based ($n_\textrm{landmarks}=5000$) UPGMA hierarchical clustering algorithms, with the DRID\cite{zhou2012distribution} and backbone dihedral angle featurizations. Error bars indicate the standard error in the mean over the cross validation folds.}
\end{figure*}

\section{Discussion}

Some amount of summarization, coarse-graining or dimensionality reduction of
molecular dynamics data sets is a necessary part of their use to answer
questions in biological physics. In this work, we argue that the goal of this
effort should essentially be to find the dominant eigenfunctions of the
system's propagator, an unknown integral operator controlling the system's
dynamics. We show that this goal can be formulated as the variational
optimization of a single scalar functional, which can be approximated using
trajectories obtained from simulation and a parametric basis set. Although
overfitting is a concern with finite simulation data, this risk can be
mitigated by the use of cross-validation.

When the basis sets are restricted to mutually-orthogonal indicator functions
or linear functions of the input coordinates, this method corresponds to the
existing MSM and tICA methods. Unlike previous formulations, it provides a method by which MSM and tICA solutions can be ``scored'' on new data sets that were not used during parameterization, making it possible to measure the generalization performance of these methods and choose the various hyperparameters required for each method, such as the number of MSM states or clustering method. Furthermore, the extension to other families of basis functions (e.g Gaussians) is straightforward, and GMRQ provides a natural quantitative basis on which to conclude whether these new methods are superior to existing basis sets.

\subsection{Connections to quantum mechanics and machine learning}
The variational theorem for eigenspaces in this work has strong connections to work in two other related fields: excited state electronic structure theory in quantum mechanics and multi-class Fisher discriminant analysis in machine learning. In quantum mechanics, \cref{thm:2} is analogous to what has been called the ensemble or trace variational principle in that field,\cite{theophilou79energy, gross88rayleigh, gidopoulos2002ensemble, lai2014density} which bounds the sum of the energy of the first $m$ eigenstates of the Hamiltonian by the trace of a matrix of Ritz values. While the goal of finding just the ground-state eigenfunction ($m=1$) is more common in computational quantum chemistry, the simultaneous optimization of many eigenstates is critical for many applications including band-structure calculations for materials in solid state physics.

Furthermore, in machine learning, this work has an analog in the theory multi-class Fisher discriminant analysis.\cite{rao1948utilization} Here, the goal is to find a low-rank projection of a labeled multi-class dataset which maximizes the between-class variance of the dataset while controlling the within-class variances. The optimal discriminant vectors are shown to be the first $k$ generalized eigenvectors of an eigenproblem involving these two variance matrices -- the problem shares the same structure as \cref{eq:ra} in this work.\cite{baudat2000generalized} We anticipate that this parallel will aid the development of improved algorithms for the identification of slow molecular eigenfunctions, especially with respect to regularization and sparse formulations.\cite{clemmensen2011sparse, sriperumbudur2011majorizarion}

\subsection{Comparison to likelihood maximization}
While we focus on the identification of the dominant eigenfunctions of the system's propagator, a different viewpoint is that analysis of MD should essential entail the construction of probabilistic, generative models over trajectories, fit for example by maximum likelihood or Bayesian methods.

As we show in \cref{sect:msm}, and \citet{nuske2014variational} have shown earlier, MSMs arise naturally from a maximization of \cref{th:varspace} when the \emph{ansatz} eigenfunctions are constrained to be linear combinations of a set of mutually orthogonal indicator functions. However, MSMs can also be viewed directly as probabilistic models, constructed by maximizing a likelihood function of the trajectories with respect to the model parameters. This probabilistic view has, in fact, been central to the field, driving the development of improved methods for example in model selection,\cite{kellogg2012evaluation, McGibbon2014Statistical} parameterization,\cite{rains2010bayesian} and coarse-graining.\cite{bacallado2009bayesian, bowman2012improved} To what extent does this imply that the variational and probabilistic views are equivalent?

In \cref{sect:tension} we show that while these two views may coincide for the particular choice of basis set with MSMs, they need not be equivalent in general. In fact, the GMRQ-optimal model formed by the first $m$ eigenfunctions of the propagator need not be positivity preserving, which is essential to form a probabilistic likelihood function in the sense of \citet{kellogg2012evaluation} or McGibbon, Schwantes and Pande.\cite{McGibbon2014Statistical} On the other hand, the two views \emph{are} tightly coupled; their connection is given by the error bounds proved by \citet{sarich2010approximation}. When the model gives a good approximation to the slow propagator eigenspace (low eigenfunction approximation error, high GMRQ), a good approximation to the long-timescale transition probabilities is obtained.

Cross validation with the log likelihood requires either a generative model for the high dimensional data, such as a hidden Markov model (HMM),\cite{mcgibbon2013understanding} or dimensionality reduction before model comparison. This is a major disadvantage, because accurate and tractable generative models for time series with tens or hundreds of thousands dimensions are not generally available. However, treating dimensionality reduction as a preprocessing and applying probabilistic models afterwards, as done by \citet{McGibbon2014Statistical}, does not enable quantitative comparison between alternative competing dimensionality reduction protocols. With the GMRQ, on the other hand, the need for a reference state decomposition or high-dimensional generative model is eliminated,\cite{bacallado2009bayesian}  and different dimensionality reduction procedures can easily be compared in a quantitative manner, as shown in \cref{fig:octaalanine}.

\section{Conclusions}

The proliferation of new and improved methods for constructing low-dimensional
models of molecular kinetics given a set of high-resolution MD trajectories has
been a boon to the field, but the lack of a unified theoretical framework for
choosing between alternative models has hampered progress, especially for non-experts applying these methods to novel biological systems. In this work we have presented a new variational theorem governing the estimation of the space formed by the span of multiple eigenfunctions of the molecular dynamics propagator. With this method, a single scalar-valued functional scores a proposed model on a supplied data set, and the use of separate testing and training data sets makes it possible to quantify and avoid statistical overfitting. During the training step, time-structure independent components analysis (tICA) and Markov state models (MSMs) are specific instance of this method with different types basis functions. This method extends these tools, making it possible to score trained models on new datasets and perform hyperparameter selection.

We have applied this approach to compare eight different protocols for Markov state model construction on a set of MD simulations of the octaalanine peptide. We find that of the methods tested, $k$-means clustering with the dihedral angles using between 50 and 200 states appears to outperform the other methods, and that the $k$-centers cluster method can be particularly prone to poor generalization performance.
To our knowledge, this work is the first to enable such quantitative and theoretically well-founded comparisons of alternative parameterization strategies for MSMs.

We anticipate that this work will open the door to more complete automation and optimization of MSM construction. Free and open source software fully implementing these methods is available in the MDTraj and MSMBuilder3 packages from \url{http://mdtraj.org} and \url{http://msmbuilder.org}, along with example and tutorials.\cite{McGibbon2014MDTraj} While the lag time, $\tau$ and rank, $m$, of the desired model must be manually specified, other key hyperparameters that control difficult-to-judge statistical tradeoffs, such as the number of states in an MSM, can be chosen be optimizing the cross-validation performance. Furthermore, given recent advances in automated hyperparameter optimization in machine learning, we anticipate that this search itself can be fully automated.\cite{NIPS2012_4522}

\section*{Acknowledgements}
This work was supported by the National Science Foundation and National Institutes of Health under Nos. NIH R01-GM62868, NIH S10 SIG 1S10RR02664701, and NSF MCB-0954714. We thank the anonymous reviewers for their many suggestions for improving this work, Christian R. Schwantes, Mohammad M. Sultan,  Sergio Bacallado, and Krikamol Muandet for helpful discussions, and Jeffrey K. Weber for access to the octaalanine simulations.

\appendix

\section{Proofs of \cref{thm:2} and \cref{lemma}}
\label{appendix:proof}
\subsection*{Proof of \cref{thm:2}}
The eigenfunctions, $\phi_i$, of $\mathcal{P}(\tau)$ form a complete basis. Expand each $f_i = \sum_a w_{ai} \phi_a$ with coefficients $W \in \mathbb{R}^{\infty \times m}$ with $W_{ni} = \langle f_i, \phi_n \rangle_{\mu^{-1}}$.

\begin{align}
P_{ij} &= \langle f_i,\; \mathcal{P} \circ f_j \rangle_{\mu^{-1}} \\
&= \Big\langle \sum_a W_{ai} \phi_a, \; \mathcal{P} \circ \sum_b W_{bj} \phi_b \Big\rangle_{\mu^{-1}} \\
&= \sum_{a} W_{ai} W_{aj} \lambda_a \label{eq:d3}\\
Q_{ij} &= \langle f_i, f_j \rangle_{\mu^{-1}} \\
& = \Big\langle  \sum_a W_{ai} \phi_a, \; \sum_b W_{bj} \phi_b \Big\rangle_{\mu^{-1}} \\
&= \sum_a W_{ai} W_{aj}  \label{eq:d5}
\end{align}

We define the diagonal matrix $D(\lambda)$ with $D_{ii}=\lambda_i$.
Then, \cref{eq:d3} and \cref{eq:d5} can be rewritten in matrix form:
\begin{align}
  P &= W^T D(\lambda) W \label{eq:d7},\\
  Q &= W^T W.  \label{eq:d8}
\end{align}

Let $F=Q^{1/2} \in \mathbb{S}^m_{++}$ be the (unique) positive definite square root of $Q$, which is guaranteed to exist because $Q$ is positive definite, and $B = WF^{-1}$. Then, rearrange the objective function using the cyclic property of the trace:
\begin{align}
\mathcal{R}_\mathcal{P}[f] &= \Tr\big( \underbrace{W^T D(\lambda) W}_P \underbrace{(F F)^{-1}}_{Q^{-1}} \big), \\
&= \Tr\big( F^{-1} W^T  D(\lambda) W F^{-1} \big), \\
&= \Tr\big( B^T  D(\lambda) B \big).
\end{align}
Note that $B^T B = F^{-1} W^T W F^{-1} = I_m$. Therefore, by application of the Ky Fan theorem,\cite{fan1949theorem, overton1992sum}
\begin{equation}
\mathcal{R}_\mathcal{P}[f] \leq \sum_i^m \lambda_i,
\end{equation}
and the equality holds when $f = \{\phi_1, \phi_2, \ldots, \phi_m\}$.

\subsection*{Proof of \cref{lemma}}
Following \citet{absil2002grassmann}, let $f = \{f_1, f_2, \ldots, f_m\}$, and $M \in \mathbb{R}^{m \times m}$ be an arbitrary invertible matrix. Define a new collection of functions $g = \{g_1, g_2, \ldots, g_m\}$, such that $g_j = \sum_{i=1}^m M_{ij} f_i$, and a matrix $W' \in R^{\infty \times m}$ such that $W'_{ni} = \langle g_i, \phi_n \rangle_{\mu^{-1}}$. Expanding the matrix elements of $W'$, we note that
\begin{align}
  W' = WM.
\end{align}
Then, using \cref{eq:d7} and \cref{eq:d8}, $\mathcal{R}_\mathcal{P}[g]$ can be written as a matrix expression involving $W'$, and subsequently rewritten involving $W$ and $M$. Expansion of the matrix products and application of the cyclic property of the trace confirms that $\mathcal{R}_\mathcal{P}[g] = \mathcal{R}_\mathcal{P}[f]$:
\begin{align}
  \mathcal{R}_\mathcal{P}[g] &=  \Tr \big( W'^T D(\lambda) W' (W'^T W')^{-1}), \\
  &= \Tr \big( (WM)^T D(\lambda) (WM) ((WM)^T (WM))^{-1} \big), \\
  &= \Tr \big( M^T W^T D(\lambda) W M^{-1} (W^T W)^{-1} M^{-T} \big), \\
  &= \Tr \big( W^T D(\lambda) W (W^T W)^{-1} \big), \\
  &= \mathcal{R}_\mathcal{P}[f].
\end{align}

The significance of this result is that it demonstrates $\mathcal{R}_\mathcal{P}$ to be invariant to linear transformations of $f$ which preserve the space spanned by the functions. Much like the Ritz value of an trial vector is invariant to rescaling, or the angle between two planes is invariant to linear transformations of the basis vectors defining the planes,  $\mathcal{R}_\mathcal{P}[f]$ is only a functional of the space spanned by $f$. This space -- the set of all $m$-dimensional linear subspaces of a vector or Hilbert space -- is referred to as a \emph{Grassmann manifold}.\cite{absil2002grassmann}

\section{Tension Between Spectral and Probabilistic Approaches}
\label{sect:tension}
Here we show, by way of a simple analytical example, the extent to which the variational and probabilistic approaches to the analysis of molecular dynamics data are indeed distinct. By explicitly constructing the propagator eigenfunctions for a Brownian harmonic oscillator, we show that the rank-$m$ truncated propagator, $\mathcal{P}_m(\tau)$, built from the first $m$ eigenpairs of $\mathcal{P}(\tau)$ is not in general a nonnegativity-preserving operator. That is, for some valid initial distributions, $p_t(\mathbf{x})$, the propagated distribution, $\tilde{p}_{t+\tau}^{(m)}(\mathbf{x}) = \mathcal{P}_m(\tau) \circ p_t(\mathbf{x})$, fails to be non-negative throughout $\Omega$ and thus does not represent a valid probability distribution.
\begin{align}
\label{eq:generator}
p_{t+\tau}^{(m)}(\mathbf{x}) &\ngeq 0\; \forall\; \mathbf{x} \in \Omega
\end{align}
This indicates that variational and probabilistic approaches have the potential to be almost contradictory in what they judge to be ``good'' models of molecular kinetics.

Consider the diffusion of a Brownian particle in the potential $U(x) = x^2$. For simplicity, we take the temperature and diffusion constant to be unity. This is an Ornstein-Uhlenbeck process, and the dynamics are described by the Smoluchowski equation,
\begin{align}
\frac{\partial}{\partial t} p_t(x) &= \mathcal{L} \circ p_t(x),
\end{align}
with infinitesimal generator $\mathcal{L}$ given by
\begin{align}
\mathcal{L} &= \frac{\partial^2}{\partial x^2} + 2 \frac{\partial}{\partial x}x,
\end{align}
and stationary distribution $\mu(x) = \pi^{-1/2} e^{-x^2}$.

We can expand the generator in terms of its eigenfunctions, $\phi_n(x)$, and eigenvalues, $\xi_n$, defined by,
\begin{align}
\mathcal{L} \circ \phi_n(x) = \xi_n \phi_n(x),
\end{align}
which can be recognized as the Hermite equation whose solutions are related to the Hermite polynomials, $H_n$. For $n=\{0, 1, \ldots\}$ the solutions are
\begin{align}
\phi_n(x) &= c_n e^{-x^2} H_n(x), \\
\xi_n &= -2n, \\
c_n^{2} &= (2^n n! \pi)^{-1},
\end{align}
where the normalizing constants, $c_n$, are chosen such that $\langle \phi_n, \phi_m \rangle_{\mu^{-1}} = \delta_{nm}$.

The propagator $\mathcal{P}(\tau)$ can be formed by integrating \cref{eq:generator} with respect to $t$, giving
\begin{equation}
\mathcal{P}(\tau) = e^{\tau \mathcal{L}}.
\end{equation}
$\mathcal{P}(\tau)$ shares the same eigenfunctions as $\mathcal{L}$. Its eigenvalues, $\lambda_n$, are related to the eigenvalues of $\mathcal{L}$ by
\begin{align}
    \lambda_n = e^{-\tau \xi_n}.
\end{align}

We now define the rank-$m$ truncated propagator, $\mathcal{P}_m(\tau)$, such that
\begin{align}
&\mathcal{P}_m(\tau) \circ p_t = \sum_{n=0}^{m-1} \lambda_n \langle p_t, \phi_n \rangle_{\mu^{-1}} \phi_n \\
\label{eq:11}
&= \sum_{n=0}^{m-1} e^{-2n\tau} c_n e^{-x^2} H_n(x) \left[\int_{-\infty}^{\infty} dx' \; c_n \sqrt{\pi} \, p_t(x') H_n(x') \right]
\end{align}
Consider an initial distribution, $p_t(x) = \delta(x-x_0)$, propagated forward in time by $\mathcal{P}_m$. Let $\tilde{p}_{\tau}^{(m)} = \mathcal{P}_m(\tau) \circ \delta(x-x_0)$. Then, \cref{eq:11} simplifies to
\begin{align}
\tilde{p}_{\tau}^{(m)}(x) = \sum_{n=0}^{m-1} \frac{1}{2^n n! \sqrt{\pi}} e^{-2n\tau} \, e^{-x^2} H_n(x) H_n(x_0).
\end{align}

Consider now the specific case of $m=2$. Using the explicit expansion $H_0(x) = 1$, and $H_1(x) = 2x$, we have
\begin{align}
\label{eq:m2}
\tilde{p}_{\tau}^{(2)}(x) = \frac{1}{\sqrt{\pi}}e^{-x^2} \left(1 + 2 x x_0 e^{-2\tau} \right).
\end{align}

Note that \cref{eq:m2} has a zero when $x = -e^{2\tau}/2x_0$, and that
\begin{align}
\tilde{p}_{\tau}^{(2)}(x) < 0\;\; \mbox{when } \begin{cases}
x < -e^{2\tau}/2x_0 &\mbox{if } x_0 > 0 \\
x > -e^{2\tau}/2x_0 &\mbox{if } x_0 < 0.
\end{cases}
\end{align}
Because of this non-positivity, $\tilde{p}_{\tau}^{(2)}(x)$ is not a valid probability distribution.

This example demonstrates that the rank-$m$ truncated propagator need not, in general, preserve the positivity of distributions it acts on. Therefore, if such a model of the dynamics are fit or assessed via maximum-likelihood methods on datasets consisting of observed transitions, despite being \emph{optimal} by spectral norm, the true rank-$m$ truncated propagator may appear to give a log likelihood of $-\infty$. The variational and probabilistic approaches to modeling molecular kinetics can indeed be very different.

\section{Double-well Potential Integrator and Eigenfunctions}
\label{appendix:analytic}

To discretize the Brownian dynamics stochastic differential equation in \cref{eq:sde} with reflecting boundary conditions at $-\pi$ and $\pi$, we used the Euler integrator,
\begin{align}
  \label{eq:euler}
  x_{t+1} = bc\left(x_t + \left(\nabla V(x_t) + \sqrt{2D} R(t)\right)\Delta t\right),
\end{align}
where steps that went outside the boundaries by a given distance were reflected back into the interval with a matching distance to the boundary:
\begin{align}
  bc(x) = \begin{cases}
    2\pi-x &\text{if $x > \pi$},\\
    -2\pi-x &\text{if $x < -\pi$},\\
    x &\text{otherwise}.
  \end{cases}
\end{align}

We computed the propagator eigenvalues by discretizing the interval into $n$ MSM states $\{s_i\}_{i=1}^n$, following \cref{eq:si}, and computing the matrix elements without stochastic sampling. This calculation is more straightforward by working with the the transition matrix $T \in \mathbb{R}^{n \times n}$:
\begin{align}
T_{ij} = \mathbb{P}\left[x_{t+\tau} \in s_j | x_t \in s_i\right],
\end{align}
Instead of the correlation and overlap matrices, $C$ and $S$, directly.
Note that as shown by \citet{nuske2014variational} and \citet{prinz2011markov}, $T=S^{-1}C$. Thus the eigenvalues of $T$ are identical to the generalized eigenvalues of $(C,S)$.

To calculate the matrix elements of $T$, we consider each each state, $s_i$, represented by its left endpoint, $x_i$. For each pair of states $(i,j)$, we calculate the probability of the random force required to transition between them in one step, from \cref{eq:euler}, taking into account the fact that because of the reflecting boundary conditions, the transition could have taken place via a transition outside the interval followed by a reflection.

Let $\delta x$ be the width of the states, $\delta x = n^{-1}2\pi$. For each $i \in \{1, \ldots n\}$ and $k \in \{-n, \ldots, n\}$, we calculate the action for
a step from $x_i$ to $x_i + k\delta x$.
\begin{align}
a_{ik} = \frac{1}{\sqrt{2\pi}}\exp\left(\frac{((x_i + k\delta x) - x_i + \nabla V(x_i)\Delta t)}{\sqrt{2D}}\right).
\end{align}

Let $j'_{ik} \in \{1, \ldots, n\}$ be the index of state which contains $bc(x_i + k\delta x)$. We calculate  $T_{ij}$ by summing the appropriate action terms which, because of the boundary conditions, get reflected into the same ending state:
\begin{align}
  T_{ij} = \sum_{k = -n}^{n} a_{ik} \; \delta_{j'_{ik},j}.
\end{align}
where $\delta_{i,j}$ is the Kronecker delta. This calculation is implemented in the file \texttt{brownian1d.py} in the MSMBuilder3 package. The eigenvalues of $T$ converge rapidly as $n$ increases. Our results in \cref{fig:doublewell} use $n=500$.

\section{Landmark UPGMA Clustering}
\label{appendix:upgma}
Landmark-based UPGMA (Unweighted Pair Group Method with Arithmetic Mean) agglomerative clustering is a simple scalable hierarchical clustering which does not require computing the full matrix of pairwise distances between all data points. The procedure first subsamples $l$ ``landmark'' data points at regular intervals from the input data. These data points are then clustered using the standard algorithm, resulting in $n$ clusters. \cite{Mullner2013fastcluster} Let $S_n$ be the set of landmark data points assigned by the algorithm to the cluster $n$, and $d(x, x')$ be the distance
metric employed. Then, each remaining data point in the training set as well
as new data points from the test set, $x^*$, are assigned to cluster, $s(x^*) \in \{1,\ldots,n\}$, whose landmarks they are on average closest to:
\begin{align}
s(x^*) = \argmin_{n} \frac{1}{|S_n|} \sum_{x \in S_n} d(x^*, x).
\end{align}

\section{Simulation Setup}
\label{appendix:octaalanine}
We performed all-atom molecular dynamics simulations of terminally-blocked octaalanine (Ace-(Ala)$_8$-NHMe) in explicit solvent using the GROMACS 4 simulation package,\cite{hess2008gromacs} the AMBER ff99SB-ILDN-NMR forcefield,\cite{liNMR2010} and the TIP3P water model.\cite{jorgensen1983comparison} The system was energy minimized, followed by 1 ns of equilibration using the velocity rescaling thermostat (reference temperature of 298K, time constant of 0.1 ps),\cite{bussi2007canonical} Parrinello-Rahman barostat (reference pressure of 1 bar, time constant of 1 ps, isotropic compressibility of $5 \times 10^{-5}$ bar),\cite{parrinello1981polymorphic} and Verlet integrator (time step of 2 fs). Production simulations were performed in the canonical ensemble using the same integrator and thermostat. Nonbonded interactions in all cases were treated with the particle mesh Ewald method, using a real space cutoff distance for Ewald summation as well as for van der Waals interactions of 10.0 $\mathrm{\AA}$.\cite{darden1993particle} Twenty six such simulations were performed, with production lengths between 20 and 150 ns each. The total aggregate sampling was 1.74 $\mu s$.

\bibliography{bibliography}
\end{document}